\newtheorem{teo}{Theorem}%[subsection]
\newtheorem{cor}[teo]{Corollary}
\newtheorem{lem}[teo]{Lemma}
\newtheorem{defn}[teo]{Definition}
\newtheorem{Exemp}[teo]{Example}
\begin{document}

\title{On Fractional Decoding of Reed-Solomon Codes}
%
%
% author names and IEEE memberships
% note positions of commas and nonbreaking spaces ( ~ ) LaTeX will not break
% a structure at a ~ so this keeps an author's name from being broken across
% two lines.
% use \thanks{} to gain access to the first footnote area
% a separate \thanks must be used for each paragraph as LaTeX2e's \thanks
% was not built to handle multiple paragraphs
%

\author{{Welington~Santos

\thanks{W. Santos is with the Programa de Pós-Graduação em Matemática, Universidade Federal do Paraná, Caixa Postal 19081, 81531-990, Curitiba-PR Brazil. His study was financed in part by the Coordenação de Aperfeiçoamento de Pessoal de Nível Superior- Brasil (CAPES)-Finance Code 001. Email: wsantos.math@gmail.com}}}% <-this % stops a space
\maketitle

\begin{abstract}
We define a virtual projection of a Reed-Solomon code $RS(q^{l},n,k)$ to an $RS(q,n,k)$ Reed-Solomon code. A new probabilistic decoding algorithm that can be used to perform fractional decoding beyond the $\alpha$- decoding radius is considered. %Fractional decoding means that we want to recover the original data from the corrupted codeword in the case that the decoder can download only an $\alpha$-proportion of the codeword. 
An upper bound for the failure probability of the new algorithm is given, and the performance is illustrated by examples.

%We consider a probabilistic approach for fractional decoding of Reed-Solomon codes beyond the $\alpha$-decoding radius.  %Finally, we create an Interleaved Reed-Solomon code and use collaborative decoding to increase the $\alpha$-decoding radius.
\end{abstract}

% Note that keywords are not normally used for peerreview papers.
\begin{IEEEkeywords}
Fractional decoding, Virtual projection, Interleaved Reed-Solomon codes.
\end{IEEEkeywords}

\IEEEpeerreviewmaketitle

\section{Introduction}
  \IEEEPARstart{A}{n} Interleaved Reed-Solomon code \cite{D.Bleichenbacher,Schmidt,Krachkovsky} is obtained by stacking $m$ codewords of different $m$ $ RS(q,n,k_{j})$ codes of the same length $n$. A codeword of an  Interleaved Reed-Solomon code is an $m\times n$ matrix over the field $\mathbb{F}_{q}$. Interleaved Reed-Solomon codes make sense in scenarios where the error affects all $m$ $ RS$ codewords at the same positions. In \cite{SchmidtBeyond}, Schmidt et al. presented a scheme that virtually extends a low-rate $ RS$ code to an Interleaved Reed-Solomon code and a probabilistic decoding algorithm that can correct errors beyond the decoding radius of the usual $RS$-code.
  
  Recently, Tamo at al. \cite{BarM}, considered error correction by maximum distance separable (MDS) codes based on part of the received codeword to define a fractional decoding problem, and the $\alpha$-decoding radius of an $(n,k,l)$ array code over a finite field $\mathbb{F}_{q}$. The fractional decoding problem is motivated by the fact that in distributed systens \cite{Dimakis}, usually there is a limitation on the disk operation as well as on the amount of information transmitted for the purpose of decoding.
   % defined fractional decoding problem, and $\alpha$-decoding radius of an $(n,k,l)$ array code over $\mathbb{F}_{q}$. %In \cite{BarM} it was also shown that Reed-Solomon codes $ RS(q^{l},n,k)$ with evaluation set $\mathcal{L}\subseteq\mathbb{F}_{q}$ achives the $\alpha$-decoding radius.
  
  In this constribuition, we consider a Reed-Solomon code $RS(q^{l},n,k)$ with evaluation set $\mathcal{L}\subseteq\mathbb{F}_{q}$ and define a virtual projection to an $ RS(q,n,k)$ Reed-Solomon code. We also present a probabilistic approach to the problem of fractional decoding. For  $\alpha =m/l$ and an $ RS(q^{l},n,k)$ code of rate $R\leqslant\frac{\alpha}{m(1-\alpha)+1}$ our method corrects more errors than guaranteed by the $\alpha$-decoding radius with failure probability given approximately by $mq^{-(m+1)(\tau_{P_{\alpha}}-t)-1}$, where $t<\tau_{P_{\alpha}}$ is the number o errors that we would like to correct.
  
  This work is structured as follows. In Sect. 2, we recall collaborative decoding of Interleaved Reed-Solomon codes \cite{Schmidt} and fractional decoding \cite{BarM}. In Sect. 3, we define a virtual projection to an $ RS(q,n,k)$-code, and we show how the virtual projection can be used to perform fractional decoding beyond the $\alpha$-decoding radius.
\section{Preliminaries}
\subsection{Reed-Solomon and Interleaved Reed-Solomon Codes}
\begin{defn}
	Let $\mathcal{L}=\lbrace\alpha_{1},\ldots,\alpha_{n}\rbrace$ where $ \alpha_{1},\ldots,\alpha_{n}$ are distinct nonzero elements of the finite field $\mathbb{F}_{q}$. For a given univariante polynomial $f(x)\in\mathbb{F}_{q}[x]$ denote
	\[f(\mathcal{L})=\left(f(\alpha_{1}),\ldots,f(\alpha_{n})\right).\]
 A Reed-Solomon code $ RS(q,n,k)$ over a field $\mathbb{F}_{q}$ with $n<q$ is given by
	\begin{equation}
	 RS(q,n,k)=\lbrace c=f(\mathcal{L}):\ f(x)\in\mathbb{F}_{q}[x]_{k}\rbrace,
	\end{equation}
	where $\mathbb{F}_{q}[x]_{k}$ denotes the set of all univeriante polynomials of degree less than $k$. The set $\mathcal{L}$ is called the evaluation set of $\mathcal{C}$.
\end{defn}

An Interleaved Reed-Solomon code of order $m$ is given by $m$ underlying $ RS$ codes, which are arranged in a matrix form.

\begin{defn}
	Let the set $\mathcal{K}=\left\lbrace k_{0},k_{2},\ldots,k_{m-1}\right\rbrace$, consist of $m$ integers, where all $k_{j}<n$. An \textit{Interleaved Reed-Solomon code} $ IRS(q,n,\mathcal{K},m)$ of order $m$ is given by
	\begin{equation}
	 IRS(q,n,\mathcal{K},m)=\left\lbrace C=\left(\begin{array}{l}
	f_{0}(\mathcal{L})\\
	f_{1}(\mathcal{L})\\
	\ \ \ \vdots\\
	f_{m-1}(\mathcal{L})
	\end{array}\right):\ f_{j}(x)\in\mathbb{F}_{q}[x]_{k_{j}}
	\right\rbrace,\end{equation}
	The codewords $f_{j}(\mathcal{L})\in  RS(q,n,k_{i})$ are called elementary codewords of the $ IRS(q,n,\mathcal{K},m)$-code. If the dimensions $k_{j}$ are equal  $\text{for all}\ j=0,\ldots, m-1$ the $ IRS$ code is called \textit{homogneous}. Otherwise, the $ IRS$ code is called \textit{heterogeneous}.
\end{defn}
In considering $IRS$ codes we are interested in column errors. This is equivalent to transmission of the $ IRS$ code over a $q^{m}$-aray channel. 

Let $C\in IRS(q,n,\mathcal{K},m)$ and $R=C+E$, where $E=(E_{1},\ldots,E_{n})$ and $w(E):=|\lbrace i:E_{i}\neq 0\rbrace|=t$, denote the received word. The $m$ elementary codewords of an $ IRS$ code are affected by $m$ elementary error words $e^{(0)},e^{(1)},\ldots,e^{(m-1)}$ of weight $wt(e^{(j)})=t_{j}\leqslant t$. Let $\mathcal{E}^{(j)}$ denote the set of error positions for the $j-th$ elementary received word. Since we are considering column erros, the union of the $m$ sets of error positions $\mathcal{E}=\mathcal{E}^{(0)}\cup\mathcal{E}^{(1)}\cup\ldots\cup\mathcal{E}^{(m-1)}\subseteq\left\lbrace 1,\ldots,n\right\rbrace$ has cardinality $|\mathcal{E}|=t$.

\subsection{Collaborative Decoding of Interleaved Reed-Solomon Codes}

In \cite{Schmidt}, Schmidt et al. introduced the concept of collaborative decoding for Interleaved Reed-Solomon codes. This decoder is based on the fact that the errors occur in the same positions of each elementary codeword of the Interleaved Reed-Solomon code.

%Assume that the codewords of an $ IRS$ code are transmitted over a $q^{m}$-ary channel.
In the first step of collaborative decoding, $m$ syndrome polynomials $S^{(0)}(x),S^{(1)}(x),\ldots,S^{(m-1)}(x)$ of degree smaller than $n-k_{j}$ are calculated. The syndrome polynomial is
\begin{equation}
S^{(j)}(x)=\sum_{i=1}^{n-k_{j}}S_{i}^{(j)}x^{i-1}
\end{equation}
with coefficients:
\begin{equation}
S_{i}^{(j)}=r^{(j)}(\alpha_{i}^{k_{j}})=\sum_{h=1}^{n}r_{h}^{(j)}\alpha_{i}^{k_{j}(h-1)}
\end{equation}
for all $i=1,\ldots,n-k_{j}$ and $j=0,\ldots,m-1$.

%Collaborative decoding calculates the \textit{error locator polynomial} $\Lambda(x)\in\mathbb{F}_{q}[x]$ with $\Lambda(\alpha_{i}^{-1})=0$ for all $i\in\mathcal{E}$. This polynomial can be normalized, e.g, such that it is monic: $\Lambda(x)=\Lambda_{1}+\Lambda_{2}x+\ldots+\Lambda_{t}x^{t-1}+x^{t}$. Since $m$ error vectors $e^{(0)},\ldots,e^{(m-1)}$ are non-zero at the same positions, the error locator polynomial $\Lambda(x)$ is the same for all $m$ received words and it can be found by solving the following common key equation:
%\begin{equation}
%S^{(i)}(x)\Lambda(x)\equiv\Omega^{(i)}(x)\mod x^{n-k_{i}},\ i=0,\ldots,m-1.
%\end{equation}
As in the classical case, these syndromes are used to form a linear system of equations $S\Lambda=T$, 
\begin{equation}\label{sitme}
\left(\begin{array}{c}
S^{(0)}\\
S^{(1)}\\
\vdots\\
S^{(m-1)}
\end{array}\right).\left(\begin{array}{c}
\Lambda_{1}\\
\Lambda_{2}\\
\vdots\\
\Lambda_{t}
\end{array}\right)=\left(\begin{array}{c}
T^{(0)}\\
T^{(1)}\\
\vdots\\
T^{(m-1)}
\end{array}\right),
\end{equation}
where each sub-matrix $S^{(j)}$ is a $(n-k_{j}-t)\times t$ matrix and each $T^{(j)}$ is a column vector of length $n-k_{j}-t$:

\begin{equation*}
S^{(j)}=\left(\begin{array}{cccc}
S^{(j)}_{t+1}&S^{(j)}_{t}&\cdots&S^{(j)}_{2}\\
S^{(j)}_{t+2}&S^{(j)}_{t+1}&\cdots&S^{(j)}_{3}\\
\vdots&\vdots&\ddots&\vdots\\
S^{(j)}_{n-k_{j}}&S^{(j)}_{n-k_{j}-1}&\cdots&S^{(j)}_{n-k_{j}-t+1}
\end{array}\right),
\end{equation*}
\begin{equation}
T^{(j)}=\left(\begin{array}{c}
-S^{(j)}_{1}\\
-S^{(j)}_{2}\\
\vdots\\
-S^{(j)}_{n-k_{j}-t}
\end{array}\right)
\end{equation}

The system of equations (\ref{sitme}) has $\sum_{j=0}^{m-1}(n-k_{j}-t)$ equations and $t$ unknowns. In order to guarantee unambiguous decoding, the number of linearly independent equations has to be greater than or equal to the number of unknowns. Under the assumption that all equations in (\ref{sitme}) are linearly independent we obtain the following restriction on $t$:
\begin{equation}\label{eq9}
\sum_{j=0}^{m-1}(n-k_{j}-t)\geqslant t
\end{equation}

Which can be rewritten as

\begin{equation}\label{jointerror}
t\leqslant\frac{m}{m+1}\left(n-\frac{1}{m}\sum_{i=0}^{m-1}k_{j}\right):=\tau_{ IRS}
\end{equation}

However, there is a certain probability that some of the equations (\ref{sitme}) are linearly dependent. In this case, there is no unique solution of the system of equations and we declare a \textit{decoding failure}.

A collaborative decoder for $IRS$ codes corrects $t\leqslant\tau_{IRS}$ errors with probability at least (\cite{Schmidt}, Theorem 7).

\begin{equation}
1-\left(\frac{q^{m}-\frac{1}{q}}{q^{m}-1}\right)^{t}\frac{q^{-(m+1)(\tau_{IRS}-t)}}{q-1}.
\end{equation}

\subsection{Fractional decoding}

Tamo at al. \cite{BarM}, introduced the concept of fractional decoding where error correction by maximum distance separable codes based on part of the received codeword is considered. The idea is that the decoder downloads an $\alpha$ proportion of each of the codeword's coordinates. Below we will describe the $\alpha$-decoding problem.

Fractional decoding is defined in the following
%An $(n,k,l)$ array code $\mathcal{C}$ is formed of $l\times n$ matrices $C=\left(C_{1},\ldots,C_{n}\right)\in\left(\mathbb{F}_{q}^{l}\right)^{n}$, where $\mathbb{F}_{q}$ is a finite field. Each column $C_{i}$ of the matrix is a codeword coordinate, and the parameter $l$ that determines the dimension of the column vector $C_{i}$ is called \textit{sub-packetization}. Array codes may also be consider as an $(n,k)$-code over the alphabet $\mathbb{F}_{q}^{l}$, and then one error amounts to an incorrect column $C_{i}$. Correcting up to $t$ errors means correcting any combination of errors $E=\left(E_{1},\ldots,E_{n}\right)\in\left(\mathbb{F}_{q}^{l}\right)^{n}$ of Hamming weight $w(E):=|\left\lbrace i:E_{i}\neq 0\right\rbrace |\leqslant t$, where the received codeword is the matrix $C+E=(C_{i}+E_{i}, i=1,\dots,n)$.

\begin{defn}
	Let $\mathcal{C}$ be an $(n,k,l)$ array code over field $\mathbb{F}_{q}$. We say that $\mathcal{C}$ corrects up to $t$ errors by downloading $\alpha nl$ symbols of $\mathbb{F}_{q}$ if there exist functions 
	\begin{equation}
	f_{i}:\mathbb{F}_{q}^{l}\longrightarrow\mathbb{F}_{q}^{\alpha_{i}l}, i=1,\ldots,n\ \text{and}\  g:\mathbb{F}_{q}^{\left(\sum_{i=1}^{n}\alpha_{i}\right)}\longrightarrow\mathbb{F}_{q}^{nl}
	\end{equation}
	such that $\sum_{i=1}^{n}\alpha_{i}\leqslant n\alpha$ and for any codeword $C\in\mathcal{C}$ and any error $E\in\left(\mathbb{F}_{q}^{l}\right)^{n}, w(E)\leqslant t$
	\begin{equation}
	g(f_{1}(C_{1}+E_{1}),\ldots,f_{n}(C_{n}+E_{n}))=(C_{1},\ldots,C_{n}).
	\end{equation}
	For $\alpha\geqslant k/n$, we define the $\alpha$-decoding radius of $\mathcal{C}$ as the maximum number of errors that $\mathcal{C}$ can correct by downloading $\alpha nl$ symbols of $\mathbb{F}_{q}$, and denote it as $r_{\alpha}(\mathcal{C})$.
	
	Define the $\alpha$-decoding radius $r_{\alpha}(n,k)$ as follows:
	\begin{equation}
	r_{\alpha}(n,k)=\max\lbrace r_{\alpha}(\mathcal{C}):\mathcal{C}\ \text{is an (n,k)-code}\rbrace .
	\end{equation}
	
\end{defn}

Given an $(n,k)$-linear code we should take $\alpha\geqslant\frac{k}{n}$ because the codeword encodes $k$ data symbols, and even without errors to recover the data the decoder needs at least as many imput symbols. If $\alpha
=1$, we return to the standard problem, so the goal of fractional decoding is study error correction for $\alpha$ in the range $\frac{k}{n}\leqslant\alpha < 1$.

It was also shown in \cite{BarM} that the $\alpha$-decoding radius of a $(n,k)$-linear code is 
\begin{equation}\label{alpharadius}
\tau_{\alpha}=\left\lfloor\frac{n-k/\alpha}{2}\right\rfloor
\end{equation}
and that an $ RS(q^{l},n,k,\mathcal{L})$ with $\mathcal{L}\subseteq\mathbb{F}_{q}$ achieves the optimal $\alpha$-decoding radius (\ref{alpharadius}). 

\section{Fractional Decoding and collaborative decoding}

%In this section we present the concept of virtual projection of a Reed-Solomon code over $\mathbb{F}_{q^{l}}$ with evaluation set $\mathcal{L}\subseteq\mathbb{F}_{q}$ to a Reed-Solomon code over $\mathbb{F}_{q}$ and show how it can be used to do fractional decoding beyond the $\alpha$-decoding radius.

\subsection{Virtual Projection to an Irterleaved Reed-Solomon Code}
Schmidt et al. \cite{SchmidtBeyond,Schmidt_Vladimir}, suggested to extend a low-rate $RS(n,k)$ code to an $ IRS$ code to perform syndrome decoding of the $ RS(n,k)$ code beyond half the minimum distance, of course, with some failure probability. Zeh et al. \cite{Zeh}, defined the mixed virtual extension of a \text{homogeneous interleaved Reed-Solom code} to an  \text{heterogeneous interleaved Reed-Solom code}  with objective of decoding beyond half its joint error-correcting capability \cite{D.Bleichenbacher}.

In this subsection, we will introduce the concept of virtual projection of a Reed-Solomon code $RS(q^{l},n,k)\subseteq\mathbb{F}_{q^{l}}^{n}$ with evaluation set $\mathcal{L}=\left\lbrace\alpha_{1},\ldots,\alpha_{n}\right\rbrace\subseteq\mathbb{F}_{q}$ to a heterogeneous Reed-Solomon code $ IRS(q,n,\mathcal{K},m)$. Our purpose is to use the virtual projection to perform fractional decoding beyond the $\alpha$-decoding radius.
\begin{defn}
	Let $A_{0},A_{1},\ldots,A_{m-1}\subseteq\mathbb{F}_{q}$ be $m$ pairwise disjoint sets of the field $\mathbb{F}_{q}$. For $j=0,1,\ldots, m-1$, define the annihilator polynomials of the set $A_{j}$ to be
	\begin{equation}
	p_{j}(x)=\prod_{\omega\in A_{j}}\left(x-\omega\right).
	\end{equation}
	Note that, $\deg p_{j}(x)=|A_{j}|\ \forall j=0,\ldots,m-1$.
\end{defn}

\begin{defn}
	Let $F=\mathbb{F}_{q^{l}}$ be a finite field extension of $B=\mathbb{F}_{q}$ of degree $l$. The field trace is defined
	\[ tr_{F/B}(\beta)=\beta+\beta^{q}+\beta^{q^{2}}+\ldots+\beta^{q^{l-1}}.\]
	Let $\zeta_{0},\zeta_{1},\ldots,\zeta_{l-1}$ be a basis of $F$ over $B$, and let $\nu_{0},\nu_{1},\ldots,\nu_{l-1}$ be the dual basis, then
	\[\beta=\sum_{i=0}^{l-1}tr_{F/B}(\zeta_{i}\beta)\nu_{i}.\]
	In other words, any element $\beta$ in $F$ can be calculated from its $l$ projections $\left\lbrace tr_{F/B}(\zeta_{i}\beta)\right\rbrace_{i=0}^{l-1}$ on $B$.
\end{defn}

\begin{defn}
	Given a polynomial $h(x)=a_{k-1}x^{k-1}+a_{k-2}x^{k-2}+\ldots+a_{0}\in\mathbb{F}_{q^{l}}[x]$ and $A_{0},\ldots,A_{m-1}\subseteq\mathbb{F}_{q}$ $m$ pairwise disjoint subsets of $\mathbb{F}_{q}$. Define
	
	\begin{IEEEeqnarray}{rCl}\label{1}
		T_{j}(h)(x)& = & h_{l-m+j}(x)(p_{j}(x))^{(l-m)(j+1)}
		\nonumber\\
		&& \qquad + \sum_{u=0}^{l-m-1}h_{u}(x)(p_{j}(x))^{u(j+1)}
		%	\label{eq:sizecorr1}
	\end{IEEEeqnarray}
	for all $ j=0,\ldots,m-1$ and the polynomial $h_{i}(x)\in\mathbb{F}_{q}[x]$ is given by
	\begin{equation}
	h_{i}(x)=tr(\zeta_{i}a_{k-1})x^{k-1}+tr(\zeta_{i}a_{k-2})x^{k-2}+\ldots+tr(\zeta_{i}a_{0}).
	\end{equation}
\end{defn}

\begin{lem}\label{lematransf}
	Let $\mathcal{C}=RS(q^{l},n,k)$ be a Reed-Solomon code and $h(\mathcal{L})\in\mathcal{C}$ where $\mathcal{L}\subset\mathbb{F}_{q}$ is the evaluation set of $\mathcal{C}$. Then each codeword $T_{j}(h)(\mathcal{L})$ is a codeword of the Reed-Solomon code
	\begin{equation}
	\mathcal{C}_{j}=RS\left(q,n,k+|A_{j}|(l-m)(j+1)\right).
	\end{equation}
\end{lem}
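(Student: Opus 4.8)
The plan is to verify directly that $T_j(h)(x)$ is a polynomial over the base field $\mathbb{F}_q$ whose degree is strictly smaller than $k+|A_j|(l-m)(j+1)$; once this is done, the assertion that $T_j(h)(\mathcal{L})$ lies in $RS\left(q,n,k+|A_j|(l-m)(j+1)\right)$ is immediate from the definition of a Reed-Solomon code as the set of evaluations on $\mathcal{L}$ of polynomials of degree below the dimension. So the whole statement reduces to two bookkeeping checks: that the coefficients land in $\mathbb{F}_q$, and that the degree comes out as advertised.

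First I would check that each projection $h_i(x)$ belongs to $\mathbb{F}_q[x]$ and has degree at most $k-1$. This is built into the construction: every coefficient of $h_i$ has the form $tr(\zeta_i a_s)$, and the field trace $tr_{F/B}$ maps $\mathbb{F}_{q^l}$ into the base field $\mathbb{F}_q$, so $tr(\zeta_i a_s)\in\mathbb{F}_q$; moreover $h_i$ only involves the monomials $x^{k-1},\ldots,x^{0}$, whence $\deg h_i\leqslant k-1$. Next, since $A_j\subseteq\mathbb{F}_q$, every root of the annihilator $p_j(x)=\prod_{\omega\in A_j}(x-\omega)$ lies in $\mathbb{F}_q$, so $p_j(x)\in\mathbb{F}_q[x]$ with $\deg p_j=|A_j|$. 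Consequently each product $h_i(x)(p_j(x))^{e}$ occurring in (\ref{1}) lies in $\mathbb{F}_q[x]$, and $T_j(h)(x)$, being a finite sum of such products, also lies in $\mathbb{F}_q[x]$.

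The next step is the degree count, which is precisely where the exponent $(l-m)(j+1)$ in the target dimension is produced. The leading term in (\ref{1}) is $h_{l-m+j}(x)(p_j(x))^{(l-m)(j+1)}$, of degree at most $(k-1)+|A_j|(l-m)(j+1)$. Every term in the remaining sum is of the form $h_u(x)(p_j(x))^{u(j+1)}$ with $0\leqslant u\leqslant l-m-1$, so its degree is at most $(k-1)+|A_j|u(j+1)\leqslant(k-1)+|A_j|(l-m-1)(j+1)$, which is strictly smaller than the degree of the leading term. Hence
\[
\deg T_j(h)(x)\leqslant k-1+|A_j|(l-m)(j+1)<k+|A_j|(l-m)(j+1).
\]

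Finally, because $\mathcal{L}\subseteq\mathbb{F}_q$ and $T_j(h)(x)\in\mathbb{F}_q[x]_{k+|A_j|(l-m)(j+1)}$, the vector $T_j(h)(\mathcal{L})$ is, by the definition of a Reed-Solomon code, a codeword of $\mathcal{C}_j=RS\left(q,n,k+|A_j|(l-m)(j+1)\right)$, as claimed. I do not anticipate a genuine obstacle: the only points requiring care are confirming that the trace maps the coefficients into $\mathbb{F}_q$ rather than into $\mathbb{F}_{q^l}$, and checking that the first term of (\ref{1}) strictly dominates all the others in degree, so that the resulting dimension matches the exponent $(l-m)(j+1)$ exactly.
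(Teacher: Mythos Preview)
Your proof is correct and follows essentially the same route as the paper: bound $\deg T_j(h)(x)$ by checking that the term $h_{l-m+j}(x)(p_j(x))^{(l-m)(j+1)}$ dominates the sum, and then verify that the evaluations land in $\mathbb{F}_q$ because $h_u,p_j\in\mathbb{F}_q[x]$ and $\mathcal{L}\subset\mathbb{F}_q$. The only cosmetic difference is that you argue directly that $T_j(h)(x)\in\mathbb{F}_q[x]$ as a polynomial, whereas the paper phrases it as $T_j(h)(\alpha_i)\in\mathbb{F}_q$ for each evaluation point; the content is identical.
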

\begin{proof}
	Fist note that
	\begin{IEEEeqnarray*}{rCl}
		\deg T_{j}(h)(x)& \leqslant & \max\left\lbrace\deg h_{l-m+j}(x)(p_{j}(x))^{(l-m)(j+1)},\right.
		\nonumber\\
		&&\left. \deg\sum_{u=0}^{l-m-1}h_{u}(x)(p_j(x))^{u(j+1)}\right\rbrace
		%	\label{eq:sizecorr1}
	\end{IEEEeqnarray*}
	and we can check that
	
	\begin{IEEEeqnarray*}{rCl}
		\deg h_{l-m+j}(x)(p_{j}(x))^{(l-m)(j+1)}&=& \deg h_{l-m+j}(x)
		\nonumber\\
		&& +|A_{j}|(l-m)(j+1)\\
		&<& k+|A_{j}|(l-m)(j+1).
		%	\label{eq:sizecorr1}
	\end{IEEEeqnarray*}
	and
	\begin{eqnarray*}
		\deg \sum_{u=0}^{l-m-1}h_{u}(x)(p_{j}(x))^{u(j+1)}%&\leqslant&\deg h_{m-l-1}(x)(p_{j}(x))^{(l-m-1)(j+1)}\\
		%&=&k+|A_{j}|(l-m-1)(j+1)-1\\
		&<&k+|A_{j}|(l-m)(j+1).
	\end{eqnarray*}
	So, $\deg T_{j}(h)(x)< k+|A_{j}|(l-m)(j+1)\  \text{for all}\  j=0,1,\ldots m-1.$
	Now we must check that $T_{j}(h)(\mathcal{L})\in\mathbb{F}_{q}^{n}$. By definition, $T_{j}(h)(\mathcal{L})=\left(T_{j}(h)(\alpha_{1}),\ldots,T_{j}(h)(\alpha_{n})\right)$, so we just need to prove that $T_{j}(h)(\alpha_{i})\in\mathbb{F}_{q}\ \text{for all}\ i=1,\ldots,n$. For all $j=0,\ldots,m-1.$ we have
	
	\begin{IEEEeqnarray*}{rCl}
		T_{j}(h)(\alpha_{i})&=&h_{m-l+j}(\alpha_{i})(p_{j}(\alpha_{i}))^{(l-m)(j+1)}
		\nonumber\\
		&& +\sum_{u=0}^{l-m-1}h_{u}(\alpha_{i})(p_{j}(\alpha_{i}))^{u(j+1)}
		%	\label{eq:sizecorr1}
	\end{IEEEeqnarray*}
	as $h_{u}(x),p_{j}(x)\in\mathbb{F}_{q}[x]\ \text{and}\ \alpha_{i}\in\mathbb{F}_{q}$ it is clear that $T_{j}(h)(\alpha_{i})\in\mathbb{F}_{q}\ \text{for all}\ i=1,\ldots,n\ \text{and}\  j=0,\ldots,m-1$.
\end{proof}

\begin{defn}\label{projection}
	Let $\mathcal{C}=RS(q^{l},n,k)$ be a Reed-Solomon code with evaluation set $\mathcal{L}=\left\lbrace \alpha_{1},\ldots,\alpha_{n}\right\rbrace\subseteq\mathbb{F}_{q}$ and let $A_{0},\ldots,A_{m-1}$ any subsets of $\mathbb{F}_{q}$ such that $\sum_{j=0}^{m-1}|A_{j}|\geqslant k$. The Virtual Projection $\mathcal{C}_{P_{m/l}}(q,n,\mathcal{K})$ is given by
	\begin{equation}\label{VPIRS}
	\mathcal{C}_{P_{m/l}}=\left\lbrace\left(\begin{array}{c}
	c_{(0)}\\
	c_{(1)}\\
	\vdots\\
	c_{(m-1)}
	\end{array}\right)=\left(\begin{array}{c}
	T_{0}(h)(\mathcal{L})\\
	T_{1}(h)(\mathcal{L})\\
	\vdots\\
	T_{m-1}(h)(\mathcal{L})
	\end{array}\right)\right\rbrace,
	\end{equation}
where $T_{j}(h)(x)$ is given by (\ref{1}) and $\mathcal{K}=\left\lbrace k_{0},\ldots,k_{m-1}\right\rbrace$ with $k_{j}=k+|A_{j}|(l-m)(j+1)\ \text{for all}\ j=0,\ldots,m-1.$
\end{defn}
Assume that a codeword $c(\mathcal{L})\in\mathcal{C}$ is transmitted over a noisy channel, which adds $t$ erros in such a way, that the word $y(\mathcal{L})=c(\mathcal{L})+e(\mathcal{L})$ is observed at the channel output. Using the observed word $y(\mathcal{L})$, we calculate the $m$ polynomials $T_{j}(y)(x)$, $j=0,\ldots,m-1$, and create the matrix
\begin{equation}\label{Y}
Y=\left(\begin{array}{ccc}
T_{0}(y)(\alpha_{1})&\ldots&T_{0}(y)(\alpha_{n})\\
T_{1}(y)(\alpha_{1})&\ldots&T_{1}(y)(\alpha_{n})\\
\vdots&\ddots&\vdots\\
T_{m-1}(y)(\alpha_{1})&\ldots&T_{m-1}(y)(\alpha_{n})
\end{array}\right)
\end{equation}

The matrix $Y$ can be considered as received word of the virtual projection $\mathcal{C}_{P_{m/l}}(q,n,\mathcal{K})$ of $\mathcal{C}=RS(q^{l},n,k)$.

\begin{teo}
	Let $c(\mathcal{L})\in RS(q^{l},n,k)$ be a codeword of a Reed-Solomon code $\mathcal{C}$ transmitted over a noisy channel. Assume that the word $y(\mathcal{L})=c(\mathcal{L})+e(\mathcal{L})$ is received, if $e=(e_{1},\ldots,e_{n})$ has $t$ nonzero coefficients $e_{i_1},\ldots,e_{i_{t}}$ then the matrix $Y$ is a corrupted codeword of the $\mathcal{C}_{P_{m/l}}(q,n,\mathcal{K})$ code with at most $t$ erroneous columns at the positions $i_{1},\ldots,i_{t}$.
\end{teo}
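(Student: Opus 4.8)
The plan is to show that the rule sending the received word $y(\mathcal{L})$ to the matrix $Y$ of (\ref{Y}) acts \emph{one coordinate at a time}, so that an error in position $i$ of $y$ can corrupt only the $i$-th column of $Y$. The linchpin is that, because the evaluation set satisfies $\mathcal{L}\subseteq\mathbb{F}_{q}$, the quantity $T_{j}(h)(\alpha_{i})$ depends only on the single symbol $h(\alpha_{i})\in\mathbb{F}_{q^{l}}$ and not on the whole polynomial $h$.

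First I would make this local reduction precise. Writing $h(x)=\sum_{s=0}^{k-1}a_{s}x^{s}$, the projected polynomial is $h_{u}(x)=\sum_{s=0}^{k-1}tr(\zeta_{u}a_{s})x^{s}$. Evaluating at $\alpha_{i}\in\mathbb{F}_{q}$, and using $\alpha_{i}^{s}\in\mathbb{F}_{q}$ together with the $\mathbb{F}_{q}$-linearity of the trace to pull each power inside $tr$, I obtain
\begin{equation*}
h_{u}(\alpha_{i})=\sum_{s=0}^{k-1}tr\bigl(\zeta_{u}a_{s}\alpha_{i}^{s}\bigr)=tr\bigl(\zeta_{u}h(\alpha_{i})\bigr).
\end{equation*}
Substituting into (\ref{1}) and recalling that $p_{j}(\alpha_{i})\in\mathbb{F}_{q}$ is a fixed scalar gives the local formula
\begin{equation*}
T_{j}(h)(\alpha_{i})=tr\bigl(\zeta_{l-m+j}h(\alpha_{i})\bigr)\bigl(p_{j}(\alpha_{i})\bigr)^{(l-m)(j+1)}+\sum_{u=0}^{l-m-1}tr\bigl(\zeta_{u}h(\alpha_{i})\bigr)\bigl(p_{j}(\alpha_{i})\bigr)^{u(j+1)},
\end{equation*}
exhibiting $T_{j}(h)(\alpha_{i})$ as an $\mathbb{F}_{q}$-linear function of the symbol $h(\alpha_{i})$ alone.

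With the local formula in hand I would introduce, for each $i$, the map $\phi_{i}\colon\mathbb{F}_{q^{l}}\to\mathbb{F}_{q}^{m}$ obtained by substituting a free symbol $z\in\mathbb{F}_{q^{l}}$ for $h(\alpha_{i})$ in that formula, so that its $j$-th component is $tr(\zeta_{l-m+j}z)(p_{j}(\alpha_{i}))^{(l-m)(j+1)}+\sum_{u=0}^{l-m-1}tr(\zeta_{u}z)(p_{j}(\alpha_{i}))^{u(j+1)}$. Each $\phi_{i}$ is $\mathbb{F}_{q}$-linear since the trace is, and by construction the $i$-th column of $Y$ equals $\phi_{i}(y_{i})$, where $y_{i}$ is the $i$-th received symbol. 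Writing $y_{i}=h(\alpha_{i})+e_{i}$ with $h\in\mathbb{F}_{q^{l}}[x]_{k}$ the polynomial satisfying $c(\mathcal{L})=h(\mathcal{L})$, additivity gives $\phi_{i}(y_{i})=\phi_{i}(h(\alpha_{i}))+\phi_{i}(e_{i})$, so $Y=C+E'$, where $C$ has $i$-th column $\phi_{i}(h(\alpha_{i}))$ and $E'$ has $i$-th column $\phi_{i}(e_{i})$. The $j$-th row of $C$ is $T_{j}(h)(\mathcal{L})$, which by Lemma \ref{lematransf} lies in $RS(q,n,k_{j})$, so $C\in\mathcal{C}_{P_{m/l}}(q,n,\mathcal{K})$ by Definition \ref{projection}.

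It remains to bound the support of $E'$, and here linearity finishes the job: whenever $e_{i}=0$ we have $\phi_{i}(e_{i})=\phi_{i}(0)=0$, so the $i$-th column of $E'$ vanishes. Thus every nonzero column of $E'$ lies in the error set $\{i_{1},\ldots,i_{t}\}$ of $e$, giving at most $t$ erroneous columns at exactly those positions, and $Y=C+E'$ is a codeword of $\mathcal{C}_{P_{m/l}}(q,n,\mathcal{K})$ corrupted in at most $t$ columns. I expect the only delicate point to be the first step---justifying that the trace may be moved across $\alpha_{i}^{s}$, which is precisely where the hypothesis $\mathcal{L}\subseteq\mathbb{F}_{q}$ is used; once the local formula is established, the additivity and support arguments are purely formal.
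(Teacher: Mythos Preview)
Your proof is correct and follows essentially the same approach as the paper: both arguments use the additivity of $T_{j}$ together with the fact that $T_{j}(e)(\alpha_{i})$ vanishes whenever $e_{i}=0$. The paper asserts this last fact as ``clearly,'' whereas you derive it explicitly from the local identity $h_{u}(\alpha_{i})=tr(\zeta_{u}h(\alpha_{i}))$ (valid because $\alpha_{i}\in\mathbb{F}_{q}$), which is the same formula the paper later writes down in (\ref{download}); your version simply makes the justification of this step visible.
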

\begin{proof}
	If $e=0$, then $y=c\in\mathcal{C}$, and by Lemma (\ref{lematransf}) we know that $Y$ is a codeword of the virtually projection $\mathcal{C}_{P_{m/l}}(q,n,\mathcal{K})$. Note that
	\begin{equation*}
	T_{j}(y)(\alpha_{i})=T_{j}(c+e)(\alpha_{i})=T_{j}(c)(\alpha_{i})+T_{j}(e)(\alpha_{i}).
	\end{equation*}
	Clearly, if $e_{i}=0$, that is, if $i\notin\left\lbrace i_{1},\ldots,i_{t}\right\rbrace$, then $T_{j}(e)(\alpha_{i})=0$ for all $j=0,\ldots,m-1$. If $i\in\left\lbrace i_{1},\ldots,i_{t}\right\rbrace$, then $T_{j}(e)(\alpha_{i})$ may be non-zero, so $Y$ has at most $t$ erroneous columns.
\end{proof}
Unlike the virtual extension to an $ IRS$ code \cite{Schmidt_Vladimir}, where it is possible to ensure that given a word $y=c+e$ the virtual extension of $y$ is a word with exactly $t$ erroneous columns, in the virtual projection we can not assure it.

In addition, in the virtual extension approach given a codeword $c\in RS(q,n,k)$ and its virtual extension $C\in IRS$ when we recover the word $C\in IRS$, we immediately recover the codeword $c\in  RS(q,k)$ (the first row of the codeword $C$). In virtual projection it is not so immediately that given a codeword $c\in  RS(q^{l},n,k)$ and its virtual projection $C\in\mathcal{C}_{P_{m/l}}$ we can recover the codeword $c\in RS(q^{l},n,k)$ just by recovering the codeword $C\in\mathcal{C}_{P_{m/l}}$, but the following ensures it.

\begin{lem}\label{Lemarecu}
	Given polynomials $\left\lbrace T_{j}(h)(x)\right\rbrace_{j=0}^{m-1}$ as in (\ref{1}). Suppose that $\sum_{j=0}^{m-1}|A_{j}|\geqslant\deg h(x)$ then we can recover the polynomials $\left\lbrace h_{j}(x)\right\rbrace$ and consequently we can recover $h(x)$.
\end{lem}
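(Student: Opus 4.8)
The plan is to peel off the trace polynomials $h_0(x),\ldots,h_{l-1}(x)$ one at a time, exploiting the pairwise disjointness of $A_0,\ldots,A_{m-1}$, and then to reassemble $h(x)$ from them through the dual-basis identity recalled above. The key observation is that for any $\omega\in A_j$ we have $p_j(\omega)=0$, so in (\ref{1}) every summand carrying a positive power of $p_j$ vanishes and only the $u=0$ term survives; that is,
\[
T_j(h)(\omega)=h_0(\omega)\qquad\text{for all }\omega\in A_j.
\]
Evaluating each $T_j(h)$ on its own set $A_j$ therefore returns the values of the single polynomial $h_0$ on the disjoint union $A:=\bigcup_{j=0}^{m-1}A_j$, a set of $\sum_{j=0}^{m-1}|A_j|$ distinct points. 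Since $\deg h_0\leqslant\deg h$ and, by hypothesis, $\sum_{j}|A_j|\geqslant\deg h$, these pooled evaluations suffice to determine $h_0$ by interpolation.

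Having recovered $h_0$, I would subtract it from each $T_j(h)$ and divide by $p_j(x)^{(j+1)}$. This division is exact: once the $u=0$ term is removed, every remaining summand of (\ref{1}) carries a factor $p_j^{\,u(j+1)}$ with $u\geqslant1$, hence is divisible by $p_j^{(j+1)}$, and the quotient has exactly the same shape as (\ref{1}) with $h_0$ deleted and all exponents lowered by $(j+1)$. Evaluating the quotient on $A_j$ now yields $h_1(\omega)$, so pooling over $j$ and interpolating over $A$ recovers $h_1$. Repeating this peeling step recovers $h_2,\ldots,h_{l-m-1}$ in turn; after the $(l-m)$-th division each of the $m$ chains collapses to its single surviving term, the high-index polynomial $h_{l-m+j}$, which is then read off directly with no further interpolation. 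This produces all of $h_0,\ldots,h_{l-1}$.

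Finally, the coefficient of $x^{s}$ in $h_i(x)$ equals $tr(\zeta_i a_s)$, so the dual-basis identity $a_s=\sum_{i=0}^{l-1}tr(\zeta_i a_s)\nu_i$ reconstructs each coefficient $a_s$ and hence $h(x)$, as claimed. The step that requires the most care is the degree bookkeeping in the interpolation: at every peeling stage one must confirm that the polynomial being recovered has degree at most $\deg h$, so that the $\sum_j|A_j|$ pooled points determine it uniquely, and it is exactly here that the disjointness of the $A_j$ (which makes $|A|=\sum_j|A_j|$) together with the hypothesis $\sum_j|A_j|\geqslant\deg h$ is consumed. A secondary point to verify is that each division by $p_j^{(j+1)}$ is exact at every stage, which follows from the arithmetic-progression spacing $u(j+1)$ of the exponents in (\ref{1}).
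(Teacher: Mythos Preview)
Your proposal is correct and follows essentially the same approach as the paper: both exploit $p_j(\omega)=0$ for $\omega\in A_j$ to read off $h_0$ on the disjoint union $\bigcup_j A_j$, interpolate, then iteratively peel via $T_j^{(s)}(h)(x)=\bigl(T_j^{(s-1)}(h)(x)-h_{s-1}(x)\bigr)/p_j(x)^{j+1}$ until the high-index polynomials $h_{l-m+j}$ remain. You additionally spell out the dual-basis reconstruction of $h(x)$ from the $h_i$, which the paper leaves implicit.
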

\begin{proof}
	$T_{j}(h)(\omega)=h_{0}(\omega)$ for all $\omega\in A_{j}$; of course, we can rewrite (\ref{1}) as 
	
	\begin{IEEEeqnarray*}{rCl}
		T_{j}(h)(x)& = & h_{l-m+j}(x)(p_{j}(x))^{(l-m)(j+1)}
		\nonumber\\
		&&+\sum_{u=0}^{l-m-1}h_{u}(x)(p_{j}(x))^{u(j+1)}\\
		&=&h_{l-m+j}(x)(p_{j}(x))^{(l-m)(j+1)}
		\nonumber\\
		&&+h_{0}(x)(p_{j}(x))^{0(j+1)}+\sum_{u=1}^{l-m-1}h_{u}(x)(p_{j}(x))^{u(j+1)}.
		%	\label{eq:sizecorr1}
	\end{IEEEeqnarray*}
	So, $T_{j}(h)(\omega)=h_{0}(\omega)\ \text{for all}\ \omega\in A_{j}$. Then, we know the evaluations of $h_{0}(\omega)$ at all the points $\cup_{j=0}^{m-1}A_{j}$ and by assumption, $\sum_{j=0}^{m-1}|A_{j}|\geqslant\deg h(x)\geqslant\deg h_{0}(x)$, so we can recover $h_{0}(x)$. Now from $h_{0}(x)$ and $\left\lbrace T_{j}(h)(x)\right\rbrace_{j=0}^{m-1}$, we can calculate the polynomials
	
	\begin{IEEEeqnarray*}{rCl}
		T_{j}^{(1)}(h)(x)& = & \frac{T_{j}(h)(x)-h_{0}(x)}{p_{j}(x)^{j+1}}\\
	%	&=&h_{l-m+j}(x)(p_{j}(x))^{(l-m-1)(j+1)}
	%	\nonumber\\
	%	&&+\sum_{u=1}^{l-m-1}h_{u}(x)(p_{j}(x))^{(u-1)(j+1)}\\
		&=& h_{l-m+j}(x)(p_{j}(x))^{(l-m-1)(j+1)}
		\nonumber\\
		&& +h_{1}(x)+\sum_{u=2}^{l-m-1}h_{u}(x)(p_{j}(x))^{(u-1)(j+1)}.
		%	\label{eq:sizecorr1}
	\end{IEEEeqnarray*}
	
	So, $T_{j}^{(1)}(h)(\omega)=h_{1}(\omega)\ \text{for all}\ \omega\in A_{j}$, and again, we know the evaluation of $h_{1}(x)$ in $\cup_{j=0}^{m-1}A_{j}$. So, we can recover $h_{1}(x)$. From $h_{0}(x),h_{1}(x)\ \text{and}\ \left\lbrace T_{j}(h)(x)\right\rbrace_{j=0}^{m-1}$ we can calculate the polynomials
	\[T_{j}^{(2)}=\frac{T_{j}^{(1)}(h)(x)-h_{1}(x)}{p_{j}(x)^{j+1}}.\]
	Since $T_{1}^{(2)}(h)(\omega)=h_{2}(\omega)$ for all $\omega\in A_{j}$, by the previous argument we can recover $h_{2}(x)$. Generally, the polynomials $\left\lbrace h_{l-m+j}(x)\right\rbrace_{j=0}^{m-1}$ can be recovered from
	\[h_{l-m+j}(x)=\frac{T_{j}(h)(x)-\sum_{u=0}^{l-m-1}h_{u}(x)(p_{j}(x))^{u(j+1)}}{(p_{j}(x))^{(l-m)(j+1)}}.\]
\end{proof}

By Lemma \ref{Lemarecu}, we conclude that given an $ RS(q^{l},n,k)$-code with evaluation set $\mathcal{L}\subseteq\mathbb{F}_{q}$ and its virtual projection $\mathcal{C}_{P_{m/l}}$ it is possible to recover a codeword $c\in\mathcal{C}$ using the code $\mathcal{C}_{P_{m/l}}$ whenever the received word $y=c+e$ has no more than $t$ errors with $t<\tau_{P_{m/l}}$, where $\tau_{P_{m/l}}$ denotes the decoding radius of $\mathcal{C}_{P_{m/l}}$. Hence, we have the following algorithm.
\begin{algorithm}
	\caption{Virtual Projection IRS Decoder}
	\textbf{Input:} Received word $y(\mathcal{L})=c(\mathcal{L})+e(\mathcal{L})$, $\alpha=m/l$
	
	\textbf{For:} $j=0\ \text{to}\ m-1$ \textbf{do}
	
	Create the matrix $Y$ from $T_{j}(y)(\mathcal{L})$ and calculate the syndromes $S^{(0)},\ldots,S^{(m-1)}$. 
	
	Compute $t$ and $\Lambda(x)$ by Algorithm 1 in  \cite{Schmidt} .
	
	\eIf{$t<\tau_{P_{\alpha}}$ and $\Lambda(x)$ is $t$-valid}{\textbf{for} each $j$ from $0$ to $m-1$ \textbf{do}
		
		evaluate errors, and calculate $T_{j}(e)(\mathcal{L})$
		
		calculate $T_{j}(\hat{c})(\mathcal{L})=T_{j}(y)(\mathcal{L})-T_{j}(e)(\mathcal{L})$
		
		Use Lemma \ref{Lemarecu} to compute $c(\mathcal{L})$
	}{decoding failure}
	
	\textbf{output:} $c(\mathcal{L})\in\mathcal{C}$ or decoding failure
\end{algorithm}

\begin{teo}\label{VPIRSR}
	Let $\mathcal{C}= RS(q^{l},n,k)$ be a Reed-Solomon code then its virtual projection code $\mathcal{C}_{P_{m/l}}(q,n,\mathcal{K})$ given by Definition \ref{projection} has maximum decoding radius $\tau_{P_{m/l}}$ given by
	\begin{equation}\label{Projectionradius}
	\tau_{P_{m/l}}=\frac{m}{m+1}\left(n-k-\frac{(l-m)}{m}\sum_{j=0}^{m-1}|A_{j}|(j+1)\right).
	\end{equation}
\end{teo}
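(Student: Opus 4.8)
The plan is to recognize that the virtual projection $\mathcal{C}_{P_{m/l}}(q,n,\mathcal{K})$ is, by construction, a heterogeneous interleaved Reed--Solomon code, so that its maximum decoding radius is obtained simply by specializing the collaborative decoding bound (\ref{jointerror}) to the component dimensions produced by the projection. First I would invoke Lemma \ref{lematransf}: for any $h(\mathcal{L})\in\mathcal{C}$, the $j$-th row $T_{j}(h)(\mathcal{L})$ is a codeword of $\mathcal{C}_{j}=RS\bigl(q,n,k_{j}\bigr)$ with $k_{j}=k+|A_{j}|(l-m)(j+1)$. Stacking these rows as in (\ref{VPIRS}) exhibits $\mathcal{C}_{P_{m/l}}$ as an $IRS(q,n,\mathcal{K},m)$ code with $\mathcal{K}=\{k_{0},\ldots,k_{m-1}\}$. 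Moreover, by the theorem preceding the decoding algorithm, a received word $y=c+e$ with $t$ nonzero error coordinates yields a matrix $Y$ as in (\ref{Y}) that is a corrupted codeword of $\mathcal{C}_{P_{m/l}}$ with at most $t$ \emph{erroneous columns}. This is precisely the common-column-error model on which collaborative decoding rests, so the syndrome analysis of Section II-B transfers without change.

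Next I would apply the defining constraint (\ref{eq9}), namely $\sum_{j=0}^{m-1}(n-k_{j}-t)\geqslant t$, which guarantees that the collaborative system $S\Lambda=T$ carries at least as many (generically independent) equations as unknowns; solving for $t$ gives the joint radius (\ref{jointerror}),
\[
\tau_{IRS}=\frac{m}{m+1}\Bigl(n-\tfrac{1}{m}\textstyle\sum_{j=0}^{m-1}k_{j}\Bigr).
\]
It then only remains to substitute $k_{j}=k+|A_{j}|(l-m)(j+1)$ and simplify. A direct computation gives
\[
\frac{1}{m}\sum_{j=0}^{m-1}k_{j}=k+\frac{(l-m)}{m}\sum_{j=0}^{m-1}|A_{j}|(j+1),
\]
and inserting this into the expression for $\tau_{IRS}$ produces exactly the claimed formula (\ref{Projectionradius}) for $\tau_{P_{m/l}}$.

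I do not anticipate a genuine obstacle, since the statement is essentially a specialization of an already-established bound; the work is conceptual rather than computational. The one point deserving care is confirming that the column-error structure really survives the maps $T_{j}$, i.e.\ that an error confined to coordinate $i$ of $y$ can corrupt only column $i$ of $Y$. This is what legitimizes a single locator polynomial $\Lambda(x)$ shared by all $m$ rows, and hence the counting argument behind (\ref{eq9}). It holds because each $T_{j}(e)(\alpha_{i})$ depends only on the $i$-th error value $e_{i}$: since $\alpha_{i}\in\mathbb{F}_{q}$, the trace is $\mathbb{F}_{q}$-linear and the projection polynomials evaluate as $h_{u}(\alpha_{i})=tr(\zeta_{u}\,h(\alpha_{i}))$, so $T_{j}(e)(\alpha_{i})$ is a function of $e_{i}$ alone, vanishing whenever $e_{i}=0$. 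Once this is recorded, the theorem follows by the substitution above.
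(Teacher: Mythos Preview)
Your proposal is correct and follows essentially the same approach as the paper: identify $\mathcal{C}_{P_{m/l}}(q,n,\mathcal{K})$ as a heterogeneous $IRS(q,n,\mathcal{K},m)$ code with $k_{j}=k+|A_{j}|(l-m)(j+1)$, apply the joint decoding radius (\ref{jointerror}), and substitute. The paper's proof is in fact terser than yours; your additional discussion of why the column-error model is inherited is already covered by the preceding theorem in the paper and is not repeated in the proof itself.
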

\begin{proof}
	The decoding radius of the $\mathcal{C}_{P_{m/l}}(q,n,\mathcal{K})$ code is the error-correcting radius of the heterogeneous $IRS(q,n,\mathcal{K},m)$ code where $\mathcal{K}=\left\lbrace k_{0},\ldots,k_{m-1}\right\rbrace$ and $k_{j}=k+|A_{j}|(l-m)(j+1)\ \text{for all}\ j=0,\ldots,m-1$. The correcting radius is given by (\ref{jointerror})
	
	\begin{eqnarray*}
		\tau_{P_{m/l}}&=&\frac{m}{m+1}\left(n-\frac{1}{m}\sum_{j=0}^{m-1}k_{i}\right)\\
		%&=&\frac{m}{m+1}\left(n-\frac{1}{m}\sum_{j=0}^{m-1}(k+|A_{j}|(l-m)(j+1))\right)\\
		%&=&\frac{m}{m+1}\left(n-\frac{1}{m}\left(mk+(l-m)\sum_{j=0}^{m-1}|A_{j}|(j+1)\right)\right)\\
		&=&\frac{m}{m+1}\left(n-k-\frac{l-m}{m}\sum_{j=0}^{m-1}|A_{j}|(j+1)\right).
	\end{eqnarray*}
\end{proof}

\begin{cor}\label{corola1}
	Let $\mathcal{C}= RS(q^{l},n,k)$ be a Reed-Solomon code and $\mathcal{C}_{P_{m/l}}(q,n,\mathcal{K})$ its virtual projection as in (\ref{VPIRS}), then:
	\begin{itemize}
		\item[i)] If $l=m$ then $\tau_{P_{m/l}}=\frac{l}{l+1}(n-k)$;
		\item[ii)] If $l=m=1$ then $\tau_{P_{m/l}}=\frac{n-k}{2}=\tau$;
		\item[iii)] If $|A_{j}|=b$ for all $j=0,\ldots,m-1$ then $$\tau_{P_{m/l}}=\frac{m}{m+1}\left(n-k-b\frac{(l-m)}{m}\binom{m+1}{2}\right).$$
	\end{itemize}
\end{cor}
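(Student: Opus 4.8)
The plan is to obtain all three items as direct specializations of the closed-form decoding radius established in Theorem \ref{VPIRSR}, namely
\begin{equation*}
\tau_{P_{m/l}}=\frac{m}{m+1}\left(n-k-\frac{(l-m)}{m}\sum_{j=0}^{m-1}|A_{j}|(j+1)\right).
\end{equation*}
No new ideas are required beyond substituting the hypotheses of each item into this expression and simplifying; the corollary is purely computational, so I would not expect any genuine obstacle.

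For item i) I would set $l=m$. Then $l-m=0$, so the entire summation term is annihilated regardless of the sizes $|A_{j}|$, while the prefactor $\frac{m}{m+1}$ becomes $\frac{l}{l+1}$, leaving $\tau_{P_{m/l}}=\frac{l}{l+1}(n-k)$. Item ii) then follows immediately by further specializing $l=m=1$ in item i): the fraction $\frac{l}{l+1}$ collapses to $\frac{1}{2}$, yielding $\tau_{P_{m/l}}=\frac{n-k}{2}=\tau$, the classical half-minimum-distance decoding radius of a single $RS(q,n,k)$ code.

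For item iii) I would substitute $|A_{j}|=b$ for every $j$, so that the sum factors as $\sum_{j=0}^{m-1}|A_{j}|(j+1)=b\sum_{j=0}^{m-1}(j+1)$. The only step worth isolating is the triangular-number identity $\sum_{j=0}^{m-1}(j+1)=\sum_{i=1}^{m}i=\binom{m+1}{2}$; inserting this back gives the stated $\tau_{P_{m/l}}=\frac{m}{m+1}\left(n-k-b\frac{(l-m)}{m}\binom{m+1}{2}\right)$. Since every step is a routine substitution, the only point demanding any care is correctly evaluating that arithmetic-series sum and tracking the placement of the $\frac{(l-m)}{m}$ factor.
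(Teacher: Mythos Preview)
Your proposal is correct and matches the paper's approach exactly: the paper's proof reads in its entirety ``Straight forward calculation from (\ref{Projectionradius}),'' and what you have written is precisely that calculation spelled out in detail. The substitutions $l=m$ (killing the $(l-m)$ factor), the further specialization $l=m=1$, and the triangular-number identity $\sum_{j=0}^{m-1}(j+1)=\binom{m+1}{2}$ are exactly the intended steps.
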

\begin{proof}
	Straight forward calculation from (\ref{Projectionradius}).
\end{proof}
Note that if, $l=m$ then $\tau_{P_{m/l}}$ is the decoding radius of a \textit{homogeneous} Interleaved Reed-Solomon code  \cite{Schmidt, Schmidt_Vladimir}. For $l=m=1$ the result $\tau_{P_{m/l}}$ is the decoding radius of the $ RS(q,n,k)$ Reed-Solomon code over $\mathbb{F}_{q}$.

\subsection{Fractional decoding beyond the $\alpha$-decoding radius}
Let $\mathcal{C}= RS(q^{l},n,k)$ a Reed-Solomon code with evaluation set $\mathcal{L}=\left\lbrace\alpha_{1},\ldots,\alpha_{n}\right\rbrace\subseteq\mathbb{F}_{q}.$ Let $\alpha=m/l$, where $m$ and $l$ are positive integers and $m|k$. We will show that is possible to perform fractional decoding beyond the $\alpha$-decoding radius. 

Let $c=(c_{1},\ldots,c_{n})=(h(\alpha_{1}),\ldots,h(\alpha_{n}))\in\mathcal{C}$, where $h(x)\in\mathbb{F}_{q^{l}}[x]_{k}$. Let also $A_{0},\ldots,A_{m-1}\subseteq\mathbb{F}_{q}$ be $m$ pairwise disjoint subsets of $\mathbb{F}_{q}$, each of size $k/m$. The $m$ symbols we download from the $i$-th coordinate are

\begin{IEEEeqnarray}{rCl}\label{download}
	d_{i}^{j}& = & tr_{\mathbb{F}_{q^{l}}/\mathbb{F}_{q}}(\zeta_{l-m+j}c_{i})(p_{j}(\alpha_{i}))^{(l-m)(j+1)}
	\nonumber\\
	&&+\sum_{u=0}^{l-m-1}tr_{\mathbb{F}_{q^{l}}/\mathbb{F}_{q}}(\zeta_{u}c_{i})(p_{j}(\alpha_{i}))^{u(j+1)}.\nonumber\\
	%\label{eq:sizecorr1}
\end{IEEEeqnarray}

Substituing $c_{i}$ by $h(\alpha_{i})$ for all $i=1,\ldots,n$, we see that $(d_{1}^{j},\ldots,d_{n}^{j})=(T_{j}(h)(\alpha_{1}),\ldots,T_{j}(h)(\alpha_{n}))$ is the $j$-th row of the virtual projection code $\mathcal{C}_{P_{\alpha}}$ of $\mathcal{C}$. Now by the fact that $|A_{j}|=k/m$ for all $j$ and by the Corollary \ref{corola1}  we  know that $\tau_{P_{\alpha}}$ is given by
\begin{equation}\label{radiusfrac}
\tau_{P_{\alpha}}=\frac{1}{m+1}\left(mn+k\binom{m}{2}-\frac{k}{\alpha}\binom{m+1}{2}\right).
\end{equation}
As $\sum_{j=0}^{m-1}|A_{j}|=k$, using the Algorithm 1 it is possible to recover the codeword $c\in\mathcal{C}$ with failure probability given by Theorem \ref{falhap}  if $c$ has no more than $t\leqslant\tau_{P_{\alpha}}$ erros. 

Note that if $m=1$ then $\alpha=1/l$ and
\begin{eqnarray*}
	\tau_{P_{\alpha}}&=&\frac{1}{2}\left(n+k\binom{1}{2}-lk\binom{2}{2}\right)\\
%	&=&\frac{1}{2}\left(n-\frac{k}{1/l}\right)\\
	&=&\frac{1}{2}\left(n-\frac{k}{\alpha}\right)=\tau_{\alpha}.
\end{eqnarray*}

For $m\geqslant 2$, we would like to improve the fractional decoding radius of $\mathcal{C}$, it means that we are interested in the case $\tau_{P_{\alpha}}\geqslant\tau_{\alpha}$
\begin{equation}\label{cond}
\tau_{P_{\alpha}}=\frac{1}{m+1}\left(mn+k\binom{m}{2}-\frac{k}{\alpha}\binom{m+1}{2}\right)\geqslant\frac{n-k/\alpha}{2}.
\end{equation}
and it is possible to check that (\ref{cond}) is true if and only if

\begin{equation}\label{raiorest}
R=\frac{k}{n}\leqslant\frac{\alpha}{m(1-\alpha)+1}=\frac{m}{m(l-m)+l}.
\end{equation}

This can be summarized in the following theorem.

\begin{teo}
	Let $\mathcal{C}= RS(q^{l},n,k)$ be a Reed-Solomon Code with evaluation set $\mathcal{L}=\lbrace\alpha_{1},\ldots,\alpha_{n}\rbrace\subseteq\mathbb{F}_{q}$ and $\alpha=m/l$. If $m\geqslant 2$ and the rate of $\mathcal{C}$ is restricted as in (\ref{raiorest}) then the maximum $\alpha$-decoding radius of $\mathcal{C}$ using Algorithm 1 is
	
	\begin{equation}
	\tau_{P_{\alpha}}=\frac{1}{m+1}\left(mn+k\binom{m}{2}-\frac{k}{\alpha}\binom{m+1}{2}\right).
	\end{equation}
	Moreover, in this case $\tau_{P_{\alpha}}\geqslant\tau_{\alpha}$.
\end{teo}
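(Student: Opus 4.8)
The plan is to read the formula for $\tau_{P_\alpha}$ directly off Corollary \ref{corola1} and then to reduce the inequality $\tau_{P_\alpha}\geqslant\tau_\alpha$ to the rate restriction (\ref{raiorest}) by elementary algebra. First I would fix the scheme: since $m\mid k$ and $q$ is large enough, choose pairwise disjoint subsets $A_0,\ldots,A_{m-1}\subseteq\mathbb{F}_q$ each of cardinality $b=k/m$, so that $\sum_{j=0}^{m-1}|A_j|=k$. Forming the virtual projection $\mathcal{C}_{P_\alpha}$ from these sets and applying Corollary \ref{corola1}(iii) gives $\tau_{P_\alpha}=\frac{m}{m+1}(n-k-\frac{k}{m}\frac{l-m}{m}\binom{m+1}{2})$; substituting $\binom{m+1}{2}=m(m+1)/2$, $k/\alpha=kl/m$ and regrouping the $k$-terms rewrites this in the stated form $\frac{1}{m+1}(mn+k\binom{m}{2}-\frac{k}{\alpha}\binom{m+1}{2})$. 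This step is purely computational.

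Next I would justify the phrase \emph{maximum $\alpha$-decoding radius using Algorithm 1}. The download consists of the $m$ symbols $d_i^0,\ldots,d_i^{m-1}$ per coordinate, hence $mn=\alpha ln$ symbols of $\mathbb{F}_q$ in total, so the budget $\alpha nl$ is met with equality and the scheme is a genuine $\alpha$-download. Because $\sum_{j}|A_j|=k\geqslant\deg h$, Lemma \ref{Lemarecu} guarantees that once the projection codeword is recovered the underlying $h(x)$, and hence $c=h(\mathcal{L})$, can be reconstructed; and $\mathcal{C}_{P_\alpha}$ corrects up to $\tau_{P_\alpha}$ column errors, so Algorithm 1 returns $c$ whenever $t\leqslant\tau_{P_\alpha}$. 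Thus $\tau_{P_\alpha}$ is exactly the radius realized by the algorithm.

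For the \emph{moreover} part I would establish inequality (\ref{cond}), $\tau_{P_\alpha}\geqslant\frac{n-k/\alpha}{2}\geqslant\tau_\alpha$, the last step holding because $\tau_\alpha=\lfloor(n-k/\alpha)/2\rfloor$. Multiplying (\ref{cond}) through by $2(m+1)$ and using $2\binom{m}{2}=m(m-1)$, $2\binom{m+1}{2}=m(m+1)$, every term collects into $(m-1)[\,n+km-\frac{k}{\alpha}(m+1)\,]\geqslant 0$. Since $m\geqslant 2$ the factor $m-1$ is positive, so this is equivalent to $n\geqslant k(\frac{m+1}{\alpha}-m)$; writing $\alpha=m/l$ and simplifying $\frac{m+1}{\alpha}-m=\frac{m(l-m)+l}{m}$ turns it into exactly $R=k/n\leqslant\frac{m}{m(l-m)+l}=\frac{\alpha}{m(1-\alpha)+1}$, which is (\ref{raiorest}).

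I expect the only genuinely delicate point to be the clean factoring of $(m-1)$ out of the cleared inequality, since this is precisely where the hypothesis $m\geqslant 2$ is used: for $m=1$ the bracket vanishes and (\ref{cond}) degenerates to the equality $\tau_{P_\alpha}=\tau_\alpha$ already observed in the text. Everything else --- verifying the download budget and invoking Corollary \ref{corola1} and Lemma \ref{Lemarecu} --- is routine bookkeeping.
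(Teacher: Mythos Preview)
Your proposal is correct and follows essentially the same route as the paper: the theorem is stated there as a summary of the discussion immediately preceding it, which fixes $|A_j|=k/m$, reads off $\tau_{P_\alpha}$ from Corollary~\ref{corola1}(iii), checks the download budget via~(\ref{download}), invokes Lemma~\ref{Lemarecu} for the recovery step, and then asserts that (\ref{cond}) is equivalent to (\ref{raiorest}). Your contribution is to carry out explicitly the algebraic check the paper leaves to the reader, in particular the clean extraction of the factor $(m-1)$ from the cleared inequality; this is exactly the right computation and correctly isolates where the hypothesis $m\geqslant 2$ enters. One very minor point of phrasing: when you write that ``Algorithm~1 returns $c$ whenever $t\leqslant\tau_{P_\alpha}$,'' remember that the collaborative IRS decoder is probabilistic, so strictly speaking it returns $c$ with the success probability of Theorem~\ref{falhap}; the paper's wording carries the same caveat.
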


\section{Failure Probability of the Algorithm I}
The failure probability  can be calculated in the same way that \cite{Schmidt} and \cite{Zeh}. %that is calculating the probability that the matrix

Note that the values of $T_{j_{1}}(e)(\alpha_{i})$ and $T_{j_{2}}(e)(\alpha_{i})$ do not depend of each other for all $j_{1},j_{2}\in\lbrace 0,\ldots,m-1\rbrace$ and we can assume that if $Y$ in (\ref{Y}) is corrupted by $t$ errors, that is, $Y=C+E$ where $E$ has $t$ non-zero columns, then each non-zero column is an independent random vector uniformly distributed over $\mathbb{F}_{q}^{m}\setminus\lbrace 0\rbrace$. Hence, we can apply Lemma 6  and Theorem 6 of  \cite{Schmidt} to upper bounded the failure probability of Algorithm 1.

%then we can assume that the errors columns  are uniformly distributed over all-nozero vectors of lenght $m$. we can apply Lemma 6  and Theorem 6 of  \cite{Schmidt}. %and $P_{r}$ is bounded by
%\begin{equation}\label{PR}
%P_{r}=\frac{q^{mr}}{(q^{m}-1)^{r}}q^{-\sum_{j=0}^{m-1}n-k_{j}-t}.
%\end{equation}
 %The upper bound on the failure probability can be calculated as in \cite{Zeh} Theorem 2. 
 
% \begin{defn}
%Two vectors non-zero vectors $v,v^{\prime}\in\mathbb{F}_{q}^{n}$ are said to be \textit{equivalent vectors} if $v^{\prime}=\beta v$ for an $\beta\in\mathbb{F}_{q}$.
 %\end{defn}
 
\begin{teo}\label{falhap}
Let $\mathcal{C}= RS(q^{l},n,k)$ be a Reed-Solomon Code with evaluation set $\mathcal{L}=\lbrace\alpha_{1},\ldots,\alpha_{n}\rbrace\subseteq\mathbb{F}_{q}$ and $\alpha=m/l$. If $m\geqslant 2$ and the rate of $\mathcal{C}$ is restricted as in (\ref{raiorest}). The probability for a decoding failure using the Algorithm 1 is upper bounded by 
\[P_{f_{\alpha}}(t)\leqslant\left(\frac{q^{m}-\frac{1}{q}}{q^{m}-1}\right)^{t}\frac{q^{-(m+1)(\tau_{P_{\alpha}}-t)}}{q-1}.\]
\end{teo}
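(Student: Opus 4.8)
The plan is to reduce the statement to the collaborative-decoding failure bound for heterogeneous Interleaved Reed-Solomon codes proved by Schmidt et al., since Algorithm 1 runs exactly that decoder on the matrix $Y$. First I would recall that the matrix $Y$ in (\ref{Y}) is, by the first theorem of this section, a received word of the virtual projection $\mathcal{C}_{P_{m/l}}(q,n,\mathcal{K})=IRS(q,n,\mathcal{K},m)$ with $k_{j}=k+(k/m)(l-m)(j+1)$, corrupted in at most $t$ columns located at the error positions $i_{1},\ldots,i_{t}$ of $e$. A decoding failure occurs precisely when the syndrome system (\ref{sitme}) used to compute the error-locator polynomial $\Lambda(x)$ has fewer than $t$ linearly independent equations, so that $\Lambda(x)$ is not uniquely determined; bounding the probability of this rank deficiency is the entire content of the theorem.

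Second, I would set up the probabilistic error model. The decisive point is that each nonzero error column of $Y$, namely $(T_{0}(e)(\alpha_{i_{s}}),\ldots,T_{m-1}(e)(\alpha_{i_{s}}))^{T}\in\mathbb{F}_{q}^{m}$, may be treated as an independent random vector uniformly distributed over $\mathbb{F}_{q}^{m}\setminus\lbrace 0\rbrace$. This rests on the trace representation in (\ref{1}): for fixed $i$, the component $T_{j}(e)(\alpha_{i})$ carries the \emph{private} coordinate $tr(\zeta_{l-m+j}e_{i})$ with coefficient $(p_{j}(\alpha_{i}))^{(l-m)(j+1)}$, a coordinate appearing in no other $T_{j'}(e)(\alpha_{i})$. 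Together with the dual-basis identity $e_{i}=\sum_{u}tr(\zeta_{u}e_{i})\nu_{u}$, this makes the $\mathbb{F}_{q}$-linear map $e_{i}\mapsto(T_{j}(e)(\alpha_{i}))_{j=0}^{m-1}$ surjective onto $\mathbb{F}_{q}^{m}$ (its matrix has an invertible diagonal block on the private coordinates), so that a uniform error value $e_{i}$ induces a column uniform over $\mathbb{F}_{q}^{m}$, hence uniform over $\mathbb{F}_{q}^{m}\setminus\lbrace 0\rbrace$ once conditioned on being nonzero; independence across the $t$ distinct positions follows since the $e_{i_{s}}$ are independent. I expect this modeling justification to be the main obstacle, in particular the degenerate case $\alpha_{i}\in A_{j}$, where $p_{j}(\alpha_{i})=0$ removes the private coordinate from the $j$-th row and can lower the rank of that linear map.

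Finally, once this error model is in place, the situation is identical to the one analyzed by Schmidt et al. for a heterogeneous $IRS$ code: independent error columns uniform over $\mathbb{F}_{q}^{m}\setminus\lbrace 0\rbrace$ and the same Hankel-type syndrome system (\ref{sitme}). I would therefore invoke Lemma 6 and Theorem 6 of \cite{Schmidt} verbatim, which bound the probability that (\ref{sitme}) is rank-deficient by
\[
\left(\frac{q^{m}-\frac{1}{q}}{q^{m}-1}\right)^{t}\frac{q^{-(m+1)(\tau_{IRS}-t)}}{q-1}.
\]
The only remaining step is to identify the decoding radius: by Theorem \ref{VPIRSR} together with the choice $|A_{j}|=k/m$, the error-correcting radius $\tau_{IRS}$ of this particular heterogeneous $IRS$ code equals $\tau_{P_{\alpha}}$ as in (\ref{radiusfrac}). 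Substituting $\tau_{IRS}=\tau_{P_{\alpha}}$ then yields exactly the asserted upper bound on $P_{f_{\alpha}}(t)$.
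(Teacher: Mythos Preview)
Your proposal is correct and follows essentially the same approach as the paper: justify that the nonzero error columns of $Y$ can be modeled as independent uniform vectors over $\mathbb{F}_{q}^{m}\setminus\lbrace 0\rbrace$, then invoke Lemma~6 and Theorem~6 of \cite{Schmidt} and substitute $\tau_{IRS}=\tau_{P_{\alpha}}$. Your justification of the error model via the private trace coordinate is in fact more detailed than what the paper provides, and your flagging of the degenerate case $\alpha_{i}\in A_{j}$ (where $p_{j}(\alpha_{i})=0$) is a genuine subtlety that the paper does not address.
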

%\begin{proof}
%We upper bound the failure probability if $t$ errors occurred using the number $M_{r}$ of non-equivalent vectors of length $t$ and certain weight $r$. $M_{r}$ can be calculated as in \cite{Schmidt}, Eq (19):
%\begin{equation*}
%M_{r}=\binom{t}{r}(q-1)^{r-1}
%\end{equation*}
%The probability $P_{r}$ is bounded by (\ref{PR}). Then, the failure probability is bounded by
%\begin{eqnarray*}
%P_{f}(t)&\leqslant&\sum_{r=1}^{t}P_{r}M_{r}\\
%&\leqslant&\frac{1}{q-1}\sum_{r=1}^{t}\binom{t}{r}(q-1)^{r}\frac{q^{mr}}{(q^{m}-1)^{r}}q^{-\sum_{j=0}^{m-1}n-k_{j}-t}\\
%&\leqslant&\frac{q^{-\sum_{j=0}^{m-1}n-k_{j}-t}   }{q-1}\sum_{r=0}^{t}\binom{t}{r}\left(\frac{(q-1)q^{m}}{q^{m}-1}\right)^{r}\\
%&=&\frac{q^{-\sum_{j=0}^{m-1}n-k_{j}-t}}{q-1}\left(\frac{q^{m+1}-1}{q^{m}-1}\right)^{t}\\
%&=&\frac{q^{-(m+1)\tau_{P_{\alpha}}+mt}}{q-1}\left(\frac{q^{m+1}-1}{q^{m}-1}\right)^{t}\\
%&=&\frac{q^{-(m+1)(\tau_{P_{\alpha}}-t)}}{q-1}\left(\frac{q^{m}-\frac{1}{q}}{q^{m}-1}\right)^{t}
%\end{eqnarray*}
%\end{proof}

\begin{Exemp}
	Let $\mathcal{C}= RS(2^{5},31,4)$ be a Reed-Solomon code with evaluation set $\mathcal{L}\in\mathbb{F}_{q}$ in this case the decoding radius of $\mathcal{C}$ is $\tau=13$ and $R\simeq 0.1290$. By definition $\alpha=\frac{m}{5}$ and $\frac{4}{31}\leqslant\frac{m}{5}<1$ thus $m\in\left\lbrace 1,2,3,4\right\rbrace$. Let $\alpha_{i}=\frac{i}{5}\ \text{for }\ i=2,3,4$ then for each $\alpha_{i}$ we have
	
	\begin{itemize}
		\item[a)] $\tau_{\alpha_{1}}=\tau_{P_{\alpha_{1}}}=5$.
		\item[b)] $\tau_{\alpha_{2}}=10<12=\tau_{P_{\alpha_{2}}}$.
		\item[c)] $\tau_{\alpha_{3}}=12<16=\tau_{P_{\alpha_{3}}}$.
		\item[d)] $\tau_{\alpha_{4}}=13<19=\tau_{P_{\alpha_{4}}}$.
	\end{itemize}
The failure probability of $c)$ is given in Table I.

\begin{table}[h]
	%	\centering
	\caption{FAILURE PROBABILITY $P_{f_{\alpha_{3}}}(t)$ FOR THE REED-SOLOMON CODE $ RS(2^{5},31,4)$.}
	\begin{tabular}{|c|cccc|}
		\hline	
		t & 12 & 13 & 14 & 15          \\ \hline
		$P_{f_{\alpha_{3}}}(t)$  & $2\times 10^{-6}$ & $7\times 10^{-5}$ & $2\times 10^{-3}$ & $8\times 10^{-2}$ \\ \hline
	\end{tabular}
\end{table}

\end{Exemp}

\begin{Exemp}
	Let $\mathcal{C}= RS(2^{5},31,6)$ be a Reed-Solomon code with evaluation set $\mathcal{L}\in\mathbb{F}_{q}$ in this case the decoding radius of $\mathcal{C}$ is $\tau=\lfloor\frac{n-k}{2}\rfloor=12$ and $R=\frac{k}{n}\simeq 0.1935$. By definition $\alpha=\frac{m}{5}$ and $\frac{6}{31}\leqslant\frac{m}{5}<1$ thus $m\in\left\lbrace 1,2,3,4\right\rbrace$. If we denoted $\alpha_{i}=\frac{i}{5}\ \text{for }\ i=2,3,4$ then for each $\alpha_{i}$ we have
	\begin{itemize}
		\item[a)] $\tau_{\alpha_{2}}=8>\tau_{P_{\alpha_{2}}}=7$. This is due to the fact that $R\simeq 0.1935$ and $\frac{\alpha_{2}}{2(1-\alpha_{2})+1}\simeq 0.1818$ that is (\ref{raiorest}) is not true in this case.
		
		\item[b)] $\tau_{\alpha_{3}}=10<12=\tau_{P_{\alpha_{3}}}$. 
		
		\item[c)] $\tau_{P_{\alpha_{4}}}=16\geq\tau_{\alpha_{4}}=11$. Note that $\tau_{P_{\alpha_{4}}}$ is even greater than the decoding radius of $\mathcal{C}$. So, without accessing the entire codeword it is possible to recover more than $\left\lfloor\frac{n-k}{2}\right\rfloor$ errors with failure probability given in the Table II.

\begin{table}[h]
	%	\centering
	\caption{FAILURE PROBABILITY $P_{f_{\alpha_{4}}}(t)$ FOR THE REED-SOLOMON CODE $ RS(2^{5},31,6)$.}
	\begin{tabular}{|c|cccccc|}
		\hline	
		t & 11 & 12 & 13 & 14 & 15 & 16           \\ \hline
		$P_{f_{\alpha_{4}}}(t)$  & $10^{-9}$ & $4\times 10^{-8}$ & $10^{-6}$ & $4\times 10^{-5}$ & $10^{-3}$ & $5\times 10^{-2}$  \\ \hline
	\end{tabular}
\end{table}

	\end{itemize}
% and $\tau_{\alpha_{4}}=11<17=\tau_{P_{\alpha_{4}}}$.
%Note that, $\tau_{\alpha_{2}}=8>7=\tau_{P_{\alpha_{2}}}$ 
%Note also that, 

\end{Exemp}

%\begin{Exemp}
%	Let $\mathcal{C}= RS(2^{5},31,11)$ be a Reed-Solomon code with evaluation set $\mathcal{L}\in\mathbb{F}_{q}$ in this case $R=\frac{11}{31}\simeq 0.3548$ and (\ref{raiorest}) is true if, and only if, $\alpha=\alpha_{4}=\frac{4}{5}$ in this case $\tau_{P_{\alpha_{4}}}=\tau=10>8=\tau_{\alpha_{4}}$.
%\end{Exemp}

\section*{Acknowledgment}

I would like to thank professor Alexander Barg for his help during my intership at University of Maryland and his
comments on this manuscript.

% Can use something like this to put references on a page
% by themselves when using endfloat and the captionsoff option.
%\ifCLASSOPTIONcaptionsoff
% \newpage
%\fi

% biography section
% 
% If you have an EPS/PDF photo (graphicx package needed) extra braces are
% needed around the contents of the optional argument to biography to prevent
% the LaTeX parser from getting confused when it sees the complicated
% \includegraphics command within an optional argument. (You could create
% your own custom macro containing the \includegraphics command to make things
% simpler here.)
%\begin{IEEEbiography}[{\includegraphics[width=1in,height=1.25in,clip,keepaspectratio]{mshell}}]{Michael Shell}
% or if you just want to reserve a space for a photo:

%\begin{IEEEbiography}{Michael Shell}
%Biography text here.
%\end{IEEEbiography}

% if you will not have a photo at all:
%\begin{IEEEbiographynophoto}{John Doe}
%Biography text here.
%\end{IEEEbiographynophoto}

% insert where needed to balance the two columns on the last page with
% biographies
%\newpage

%\begin{IEEEbiographynophoto}{Jane Doe}
%Biography text here.
%\end{IEEEbiographynophoto}

% You can push biographies down or up by placing
% a \vfill before or after them. The appropriate
% use of \vfill depends on what kind of text is
% on the last page and whether or not the columns
% are being equalized.

%\vfill

% Can be used to pull up biographies so that the bottom of the last one
% is flush with the other column.
%\enlargethispage{-5in}

% that's all folks
\end{document}